\newtheorem{theorem}{Theorem}[section]
\newtheorem{definition}[theorem]{Definition}\rm
\newtheorem{lemma}[theorem]{Lemma}
\newtheorem{proposition}[theorem]{Proposition}
\newenvironment{proof}{\noindent\bf Proof. \rm}{\mbox{}\hfill $\square$\vspace*{3mm}}
\def\dotminus{\mbox{  \raisebox{0.6ex}{$\  \cdot$}\hspace{-1.2ex}$-\   $}}
\def\IN{\mathbb{N}}
\def\IQ{\mathbb{Q}}
\def\IR{\mathbb{R}}
\title{On the Weak Computability of Continuous Real Functions}
\author{Matthew S. Bauer  and  Xizhong Zheng
\institute{Department of Computer Science and Mathematics\\ Arcadia University\\
Glenside, PA 19038, USA}
\email{\{mbauer, zhengx\}@arcadia.edu}
}
\begin{document}
\maketitle

\begin{abstract}
In computable analysis, sequences of rational numbers which effectively converge to a real number $x$ are used as the ($\rho$-) names of $x$. A real number $x$ is computable if it has a computable name, and a real function $f$ is computable if there is a Turing machine $M$ which computes $f$ in the sense  that, $M$ accepts any $\rho$-name of $x$ as input and outputs a $\rho$-name of $f(x)$ for any $x$ in the domain of $f$.  By weakening the effectiveness requirement of the convergence and classifying the converging speeds of rational sequences, several interesting classes of real numbers of ``weak computability" have been introduced in literature, e.g., in addition to the class of computable real numbers (EC), we have the classes of semi-computable (SC), weakly computable (WC), divergence bounded computable (DBC) and computably approximable real numbers (CA). In this paper, we are interested in the weak computability of continuous real functions and try to introduce an analogous classification of weakly computable real functions. We present definitions of these functions by Turing machines as well as by sequences of rational polygons and prove these two definitions are not equivalent. Furthermore, we explore the properties of these functions, and among others, show their closure properties under arithmetic operations and composition.
\end{abstract}

\section{Introduction}

Computability theory begins with a definition of computable functions operating on the set $\Sigma ^*$ of finite strings over a finite alphabet $\Sigma$. A function $f$ defined on $\Sigma^*$ is computable if there is a Turing machine (TM) $M$ which on input $x \in \Sigma ^*$ outputs $f(x) \in \Sigma ^*$ in a finite number of steps \cite{Rog67, So87}. This computability can be transferred from $\Sigma ^*$ to other countable sets by means of coding systems, but is, however, limited in its ability to induce computability on uncountable sets. For this purpose, the framework of TTE (Type-2 Theory of Effectivity, see \cite{Wei00}), is introduced, allowing for computability over sets of a cardinality up to continuum, like the real numbers.

In TTE, a (type-2) Turing machine $M$ can accept inputs from either $\Sigma ^*$ or $\Sigma ^{\omega}$, where $\Sigma ^{\omega}$ is the set of all infinite strings over $\Sigma$. On inputs and outputs from $\Sigma ^*$, the type-2 Turing machine is defined exactly the same as the classical Turing machine. However,  $M$ can output an infinite sequence $p$ if it operates forever and writes the infinite sequence $p$ on its output tape. If the machine halts in a finite number of steps, or does not write an infinite sequence on its output tape, then the machine diverges, i.e. no output, in this case.

In computable analysis, a real function $f$ is called \emph{computable}, if there is a Type-2 Turing machine $M$ that computes it. That is,  $M$ transforms the name of any real number $x$ in the domain of $f$ into a name of $f(x)$. Here the name of a real number $x$  is in principle a sequence $(x_s)$ of rational numbers which effectively converges to $x$ in the sense that $|x- x_s| \le 2^{-s}$ for any $s$. If, in particular, a real number $x$ has a computable name, then it is called a computable real number. For a computable real number $x$, its computable name offers an effective approximation to $x$ with an effective error estimation. As shown by H.~G.~Rice \cite{Ric54}, a real number is computable if and only if it has a computable binary expansion. In fact, as R.~Robinson \cite{Rob51} has pointed out, all classical mathematical definition of real numbers (by Cauchy sequences, Dedekind's cuts, binary or decimal expansions, nested interval sequences, etc.) can be used to define the notion of computable real numbers equivalently. The class of computable real numbers is denoted by EC (for Effectively Computable).

A real number with a non-computable c.e.~set as its binary expansion is not computable. However it can still be approximated from below. We call a real number {\em left computable} if there is an increasing computable sequence of rational numbers converging to $x$. The right computable real numbers can be defined similarly.   We call a real number {\em semi-computable} if it is either left computable or right computable. The class of semi-computable real numbers is denoted by SC while left and right computable real numbers are denoted by LC and RC, respectively. Clearly, a real number is computable if it is both left and right computable and can be approximated from above and below.

As noted in \cite{AWZ00}, there are two left computable real numbers $y$ and $z$ such that their difference $x := y - z$ is no longer semi-computable. Thus the class of weakly computable real numbers, denoted WC, was introduced as the closure of semi-computable real numbers under the operations ``$+$" and ``$-$". Actually, we can define a real number to be {\em weakly computable} if it is the difference of two left computable real numbers. As shown in \cite{AWZ00}, a real number is weakly computable if and only if there is a computable sequence of rational numbers $(x_n)$ which converges to $x$ ``weakly effectively" in the sense that $\sum_{n=0}^{\infty} |x_{n+1} - x_n|$ is bounded. The class of WC numbers is closed under the arithmetic operations and thus forms a field.

In \cite{ZLB06} Zheng, Lu and Bao investigate the class of divergence bounded computable numbers, denoted DBC. A sequence $(x_n)$ converges $h$-bounded effectively for a total function $h: \IN \to \IN$  if there are at most $h(n)$ non-overlapping pairs $(i,j)$ of indices such that $|x_i - x_j| \geq 2^{-n}$ for all $n \in \mathbb{N}$. A real number $x$ is divergence bounded computable if there is a computable sequence of rational numbers $(x_n)$ that converges to $x$ $h$-bounded effectively with an computable function $h$. It is shown in \cite{ZRG05} that the class of DBC reals is the closure of SC or WC real numbers under total computable real functions. Additionally a real number $x$ is {\em computably approximable}, CA, if there is a computable sequence of rational numbers converging to $x$ without any restriction on the convergence.

As mentioned previously, a real function is computable if there is a Turing machine that maps a $\rho$-name of $x \in dom(f)$ to a $\rho$-name of $f(x)$. This implies immediately that all computable functions are continuous. Several other equivalent definitions of computable real functions also have been discussed in \cite{Grz57, Lac55, PR89}. Particularly, analogous to the classic Weierstrass theorem for the continuous function, a computable function $f$ defined on a closed interval can be described as the limit of a computable sequence of rational polygons which converges uniformly effectively. Here a rational polygon is a continuous piecewise linear function which connects a  finite set of rational turning points on a closed interval. Furthermore, a sequence of rational polygons $(pg_n)$ is uniformly effectively convergent means that  $|pg_n(x) - f(x)| < 2^{-n}$ holds for all $n\in\IN$ and $x$ in the closed interval.

There are, however, some very basic real functions that are not computable. For example, any constant function $f(x) = c$ where $c$ is a non-computable constant is not computable. If this constant happens to be a real number of some weak computability mentioned above, it is quite natural to call the constant function ``weakly computable". Therefore, it is meaningful to introduce reasonable definition of such ``weakly computable" real functions. In \cite{WZ00} Weihrauch and Zheng introduce computability on lower semi-continuous and upper semi-continuous real functions. A real function is {\em lower (upper) semi-continuous} if there is an increasing (decreasing) sequence of polygon functions which converges to it, or, equivalently, it can be approximated from below (above). If the monotonic sequences of polygon functions above are computable, then we can define naturally {\em lower (upper) semi-computable} real functions. However, other than these two classes, the properties related to the continuity of a real function become very complicated. On the other hand, most real problems can be modeled satisfactorily only  by continuous functions.  Therefore, we will focus only on the continuous functions defined on a closed interval in this paper. All functions discussed in this paper are continuous, if it is not pointed out otherwise. Some times we do not even mention the word ``continuous" explicitly. Such functions can be approximated by a sequence of rational polygon functions. Instead of effective convergence of the polygon sequences, we consider weaker effectivity of the convergence and hence introduce several versions of ``weakly computable continuous real functions". In addition, we can also w.l.o.g. consider functions only on the closed interval $[0, 1]$.

By classic Weierstrass theorem, any continuous real function defined on the interval $[0,1]$ is the limit of a sequence of polygon functions. If we consider only the computable sequences of rational polygon functions and require the effectivity of convergence in different levels, then we are able to introduce various classes of continuous functions with different versions of ``computablity". For example, if the computable polygon sequence is increasing (decreasing), then the limiting function is lower (upper) semi-computable. Notice that, although we use the same name, this differs slightly from that of \cite{WZ00} because of the additional continuity requirement. If the sequence converges ``weakly effectively" or ``$h$-bounded effectively", then we can define the classes of ``weakly computable" and ``divergence bounded computable" continuous functions in a straightforward way. We investigate the equivalent definitions of these function classes and explore their mathematical properties.

The paper is organized as follows. In section \ref{sec-pre}, we recall some basic notions and results on the computability of real numbers as well as some fundamental properties related to computable functions. Next we discuss some properties related to the lower and upper semi-computable functions in section \ref{sec-sc}. In section \ref{sec-wc}, we introduce two definitions for weakly computable functions and show that these definitions are not equivalent.

\section{Preliminaries}\label{sec-pre}

In this section, we recall definitions and results related to computability of real numbers and real functions which are used in this paper.

The computability related to the natural numbers can be easily defined by Turing machines. For example, a function $f: \IN \to \IN$ is computable if there is a Turing machine $M$ such that, for any input string $1^n$, $M$ will output the string $1^{f(n)}$. Similarly, computable functions on any countable sets can be defined by such kind of simple coding. Notice that a rational sequence $(x_n)$ is practically a function from $\IN$ to $\IQ$, so we can define the computable rational sequences $(x_n)$ just by the computable function $f$ defined by $f(n) := x_n$.

A real number $x$ is called computable if there is a computable sequence $(x_n)$ of rational numbers which converges to $x$ effectively in the sense that $|x-x_n| \le 2^{-n}$ for all $n$. This is equivalent to a computable sequence $(x_n)$ that converges to $x$ such that $|x_n -x_{n+1}| \le 2^{-n}$ for all $n$. The class of all computable real numbers is denoted by EC.

As the limits of computable increasing (decreasing) sequences of rational numbers, the left and right computable real numbers are the first extensions of computable real numbers. Left computable real numbers also are called computably enumerable (c.e., for short). The union of left and right computable real numbers is the class of semi-computable real numbers. The classes of left, right and semi-computable real numbers are denoted by LC, RC, and SC, respectively.

The class of semi-computable real numbers is unfortunately not closed under arithmetical operations. Ambos-spies, Weihrauch and Zheng \cite{AWZ00} then introduce the class of \emph{weakly computable} real numbers as the arithmetical closure of the semi-computable real numbers. Equivalently, a real number $x$ is weakly computable if it is the difference of two left computable real numbers. This is the reason why weakly computable real numbers are also called d-c.e. (difference of c.e. real numbers). More interestingly, it is shown in \cite{AWZ00} that $x$ is weakly computable if and only if there is a computable sequence $(x_n)$ of rational numbers which converges to $x$ such that $\sum_{n = 0} ^{\infty} |x_n - x_{n+1}|$ is finite. In this case, the sequence is convergent weakly effectively. The class of all weakly computable real numbers is denoted by WC, which is obviously a field.

The class WC has a lot of nice mathematical properties. However, it is not closed under the computable operations (i.e. computable total real functions). The closure of WC under total computable real functions is defined as the class of {\em divergence bounded computable} by \cite{ZLB06} and is denoted by DBC. It turns out the the class DBC is also the closure of LC, RC or SC under the total computable real functions. DBC can also be described as the limits of computable sequences $(x_n)$ of rational numbers which converges $h$-bounded effectively for some total computable function $h$. That is, there are at most $h(n)$ non-overlapping pairs $(i,j)$ of indices such that $|x_i - x_j| \geq 2^{-n}$ for $n \in \mathbb{N}$.  Finally, we consider the limits of computable sequences of rational numbers without any restriction on their convergence, then we achieve the class of computably approximable real numbers (the class CA).

The relationship among these number classes in shown below:
\begin{center}
EC = LC $\cap$ RC $\subsetneq$
\begin{tabular}{c}
  LC \\
  RC
\end{tabular}
$\subsetneq$ SC = LC $\cup$ RC $\subsetneq$ WC $\subsetneq$ DBC $\subsetneq$ CA
\end{center}

For every real number class, there is a condition related to the convergence of a sequence $(x_n)$ as it approaches a real number $x$. Each of these convergence conditions induces different mathematical properties. The classes of semi-computable real numbers has the strongest version of (weak) convergence but induces the smallest set of mathematical properties. Left and right computable real numbers are not closed under the arithmetic operations. We can, however, say that a real number $x$ is LC if and only if $-x$ is right computable. Additionally, if $x$ is LC or RC and $y$ is EC, their sum $x+y$  is still LC or RC.

We now shift our focus to computable real functions. In this paper, we focus on two main definitions of computable real functions. The first is due to Pour-El and Richards \cite{PR89}. A real function $f: [0,1] \to [0,1]$ is computable if and only if there is a computable sequence of rational polygons $(pg_n)_{n \in \mathbb{N}}$ that converges to $f(x)$ uniformly effectively in the sense that $|f(x) - pg_n(x)| \leq 2 ^ {-n}$ for all $x \in [0;1]$. In the preceding definition, we can replace the sequence of rational polygons with a sequence of rational polynomials.

A second, and equivalent definition is due to Weihrauch \cite{Wei00}. Let $f: \subseteq \mathbb{R} \to \mathbb{R}$ be a function and $M$ be a Turing machine. Machine $M$ computes $f$ with respect to the representation $\rho$ if, for any $x \in dom(f)$ and any $\rho$-name $p$ of $x$, $M$ with input $p$ outputs a $\rho$-name of $f(x)$. $f$ is computable or $(\rho, \rho)$-computable if there is a TM which computes $f$ with respect to the representation $\rho$. Here $\rho$ can be replaced with any equivalent representation for a real numbers $x$, such as that of nested interval sequences. Given two computable real functions $f,g \subseteq \mathbb{R} \to \mathbb{R}$ the composition $f \circ g$ is computable. Some examples of computable real functions include $f(x) = -x$, $f(x,y) = x + y$, $f(x,y) = x \times y$.

\section{Lower and Upper Semi-Computable Continuous Functions}\label{sec-sc}
In this section, we give a definition for continuous lower semi-computable functions on the interval [0,1] and investigate some basic properties of these functions. Our first definition utilizes a sequence $(pg_n)$ of rational polygon functions  that converges increasingly. Next we present an equivalent definition by means of a Turing machine.  Finally, we discuss the closure properties under the arithmetical operations and composition. The main idea of this section is very similar to that of \cite{WZ00}, and the proofs here could be simplified a lot by applying results there. However, we avoid mentioning various representations in \cite{WZ00} and state the results and proofs in a more straightforward way. The reader may find it easier to understand.

Let's give our formal definition first.

\begin {definition}\rm \label{def-cs-function}
Let $f: [0; 1] \to  \mathbb{R}$ be a continuous function.

(1) $f$ is {\em lower semi-computable} if there is a computable increasing sequence $(pg _n) _{n \in \mathbb{N}}$ of rational polygon functions  such that for all $x \in dom(f)$  we have $pg_n(x) \leq pg _{n+1}(x)$ and $f(x) = \lim _{n \rightarrow \infty} pg _n (x)$. The class of all lower semi-computable functions is denoted by $\mathbb{LSCF}$.

(2) $f$ is upper semi-computable if there is a computable decreasing sequence  $(pg _n) _{n \in \mathbb{N}}$ of rational polygons  such that for all $x \in dom(f)$  we have $pg_{n+1}(x) \leq pg _{n}(x)$ and $f(x) = \lim _{n \rightarrow \infty} pg _n (x)$. The class of all upper semi-computable functions is denoted by $\mathbb{USCF}$.
\end{definition}

It is easy to see that a real function is computable if it is both lower and upper semi-computable. This means that a computable function $f$ can be approximated by two computable sequences of polygons converging to $f$, one from above and the other from below. We also call a function {\em semi-computable} if it is lower or upper semi-computable and denote the class of all semi-computable functions by  $\mathbb{SCF}$. Notice that, although using the same names, our definition of lower and upper semi-computable functions is slightly stronger than that of \cite{WZ00} because of the requirement of continuity here.

Now we can see that the lower and upper computable functions can be described naturally by Turing machines as follows.

\begin{theorem}\label{thm-eq-lcs}
A continuous function $f: [0; 1] \to  [0; 1]$ is lower semi-computable if and only if there is a type-2 Turing machine that maps an $\rho$-name of each real number $x \in dom(f)$ to an increasing sequence of rational numbers converging to $f(x)$.
\end{theorem}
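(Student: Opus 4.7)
The plan is to prove both implications. The forward direction is straightforward: given a computable increasing sequence $(pg_n)$ of rational polygons with $pg_n\nearrow f$ pointwise, build a machine $M$ that, on a $\rho$-name of $x$, computes for each $n$ a rational $a_n$ satisfying $pg_n(x)-2^{-n}<a_n<pg_n(x)$ (this is possible because each $pg_n$ is given by a finite rational description and hence is a computable real function of $x$), and then outputs the running maximum $b_n:=\max\{a_0,\ldots,a_n\}$. Since $a_n<pg_n(x)\le f(x)$, each $b_n<f(x)$; the $b_n$ form a non-decreasing rational sequence, and because $b_n\ge a_n>pg_n(x)-2^{-n}$ they converge to $f(x)$.

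For the reverse direction, assume a type-2 machine $M$ sending $\rho$-names of $x$ to increasing rational sequences converging to $f(x)$. The strategy is to c.e.-enumerate rational polygons $p$ that are certifiably $\le f$ on $[0,1]$, show they approximate $f$ arbitrarily closely from below, and then take $pg_n$ as the pointwise maximum of the first $n$ enumerated polygons. The certification has two layers. First, enumerate pairs $(I,v)$ with $I\subseteq[0,1]$ a rational interval, $v\in\IQ$, and $v\le f(x)$ for every $x\in I$: for each finite prefix $\sigma=(q_0,\ldots,q_m)$ of a potential $\rho$-name, let $I_\sigma:=\bigcap_{i\le m}[q_i-2^{-i},q_i+2^{-i}]\cap[0,1]$, and for every rational $v$ that $M$ produces in a finite simulation consulting only $\sigma$, enumerate $(I_\sigma,v)$; validity holds because any infinite extension of $\sigma$ is a $\rho$-name of some $x\in I_\sigma$ and $v$ is among $M$'s outputs on that name. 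Second, enumerate rational polygons $p$ for which there is a finite subfamily $(I_1,v_1),\ldots,(I_k,v_k)$ of the first enumeration covering $[0,1]$ with $\sup_{x\in I_j}p(x)\le v_j$ for every $j$; this check is decidable from the finite data and any such $p$ is automatically $\le f$ on $[0,1]$.

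It remains to show that for every $\varepsilon>0$ some enumerated polygon $p$ satisfies $f-\varepsilon\le p<f$ uniformly. By uniform continuity of $f$ on $[0,1]$, linearly interpolating rational approximations of $f(q)-\varepsilon/2$ at a fine enough dyadic grid produces a rational polygon $p$ with $\min_{[0,1]}(f-p)>0$ and $\sup_{[0,1]}(f-p)<\varepsilon$. Since $M$ is a continuous type-2 map and $p<f$ strictly, each $x\in[0,1]$ has a finite $\rho$-name prefix whose associated interval and $M$-output witness $p<v\le f$ locally; compactness yields a finite subcover, so $p$ is eventually enumerated. The running maximum of the enumerated polygons is then a computable, non-decreasing sequence of rational polygons converging (uniformly) to $f$. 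The main obstacle lies in this reverse direction and hinges on combining the continuity of $M$ (finite-time computations consult only finite input prefixes, yielding lower bounds valid on whole intervals) with the uniform continuity of $f$ (guaranteeing rational polygons that hug $f$ from strictly below), so that compactness can promote local certification into a global $p\le f$.
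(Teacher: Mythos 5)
Your proof is correct and follows essentially the same route as the paper in both directions: the forward direction evaluates each polygon on the input approximation with a $2^{-n}$ error allowance and takes running maxima, and the reverse direction exploits the fact that a finite-time type-2 computation consults only a finite prefix of the input, so each output value is a valid lower bound for $f$ on a whole rational interval, after which compactness of $[0,1]$ yields rational polygons certified to lie below $f$. One phrasing slip worth fixing: the validity of a pair $(I_\sigma, v)$ rests on the fact that every $x \in I_\sigma$ admits a $\rho$-name extending $\sigma$ (not that every infinite extension of $\sigma$ is a $\rho$-name of some $x\in I_\sigma$, which is false for arbitrary extensions); apart from that, your explicit density argument for why the enumerated polygons converge to $f$ is a point where you are actually more careful than the paper, which merely asserts the convergence.
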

\begin{proof}
``$\Rightarrow$":  Assume that $f: [0,1] \to \IR$ is a lower semi-computable function. Then, by definition,  there is an increasing computable sequence of rational polygons  $(pg _n) _{n \in \IN}$ which converges to $f$. We want to show that there is a type-2 Turing machine $M$ that computes $f$ in the sense that $M$ takes an effectively convergent sequences of rational numbers $(x_s)_{s \in \mathbb{N}}$ converging to $x$ as input, and outputs an increasing sequence of rational numbers $(z_n)_{n \in \mathbb{N}}$ that converges to $f(x)$, for any $x \in [0,1]$.

Notice that, rational polygon function $pg_i$ is a finite combination of linear functions. Thus we can find a computable modulus function $m_i$ of $pg_i$ such that
\begin{eqnarray}\label{cond-modulus}
|x - y| < 2^{-m_i(s)} \Longrightarrow |pg_i(x) - pg_i(y)| < 2^{-s}
\end{eqnarray}
for all $x, y\in [0,1]$ and all $s \in \IN$. Actually, it is not difficult to see that $m_i(s) = a_i\cdot s$ is such a modulus function, where $a_i$ is the maximum slope of all linear parts of $pg_i$. Without loss of generality we can assume that the modulus function $m_i$ is a strictly increasing function.

For any sequence $(x_s)$ of rational numbers which effectively converges to $x$, we can first define a new sequence $(u_s)$ of rational numbers by
\begin{eqnarray}\label{def-u_s}
u_s := x_{m_s(s)}
\end{eqnarray}
From the effective convergence of the sequence $(x_s)$ we have
\begin{eqnarray}\label{cond-u_s}
|u_s -x| = |x_{m_s(s)} -x| \le 2^{-m_s(s)}
\end{eqnarray}
and hence $\lim u_s = x$.
By (\ref{cond-modulus}), this implies that $|pg_s(u_s) -pg_s(x)| \le 2^{-s}$ for all $s$. In other words, we have
$$pg_s(u_s) -2^{-s} \le pg_s(x) \le f(x).$$
Define a sequence $(y_s)$ of rational numbers by $y_s := pg_s(u_s) -2^{-s}$. Then, we have $y_s \le f(x)$ and $\lim y_s = \lim pg_s (u_s) = f(x)$. From the sequence $(y_s)$ we can now define an increasing sequence $(z_s)$ of rational numbers simply by $z_s := \max\{y_t : t \le s\}$. Clearly, the increasing sequence $(z_s)$ converges to $f(x)$.

The procedure from $(x_s)$ to $(z_s)$ described above obviously can be done by a type-2 Turing machine.

$``\Leftarrow"$: Assume now that we have a Turing machine $M$  that transforms any sequence $(x_s)$ of rational numbers converging effectively to  $x$, into an increasing sequence of rational numbers which converges to $f(x)$ for all $ x \in dom(f)$. From this machine, we want to construct an increasing computable sequence of rational polygons $(pg_n)$ that converges to $f(x)$.

The sequence $(pg_s)$ is constructed in stages. At the stage $s$, we define the rational polygon $pg_s$ as follows.

For any rational number $r$ from $[0; 1]$, the sequence $(r_s)$ defined by $r_s =r$ for all $s$ is a sequence of rational numbers that converges effectively to $r$. For this input $(r_s)$, the machine $M$ should output an increasing sequence $(y_s)$ converging to $f(r)$. In particular, $M$ outputs the rational numbers $y_0, y_1, \cdots, y_s$ in finite steps. In this computation, $M$ can read only a finite initial segment of the input $(r_s)$. Let $r_0, r_1, \cdots, r_k$ be the part of $(x_s)$ which is possibly used in the computation of $M((r_s))$. Then define a rational open interval $I_r := (r - 2^{-k}, r + 2^{-k})$.

Now, for any real number $x \in I_r$, choose arbitrarily a sequence $(x_s)$ of rational numbers which converges to $x$ effectively. Consider a new sequence $(u_s)$ of rational numbers defined by
\[u_s = \left\{
\begin{array}{l l}
  r & \quad \mbox{if $s \leq k$}\\
  x_s & \quad \mbox{if $s > k$}\\
  \end{array}
  \right. \]
The sequence $(u_s)$ converges to $x$ effectively as well. Therefore $M((u_s))$ also will output an increasing sequence converging to $f(x)$. Since $(u_s)$ and $(r_s)$ share the same initial segment of length $k$, $M((u_s))$ will output the same initial part $y_0, y_1, \cdots, y_s$ as $M((r_s))$ does. This means that $y_s \le f(x)$ holds for all $x \in I_r$.

The class $\{I_r : r \in [0,1]\}$ forms an open cover of the closed interval $[0,1]$. By the compactness of $[0,1]$, there is a finite subcover. Because we can enumerate all rational numbers $r$ in $[0,1]$ and for each $r$ we can effectively find the interval $I_r$, this subcover can be determined in finite steps by a simple enumeration procedure.

Let $\{I^1, I^2, \cdots, I^t\}$ be such a finite subcover. Notice that each interval $I_i$ combines a rational number $y^i_s$ such that $y^i_s \le f(x)$ for all $x \in I_i$. Based on this finite collection of intervals we can easily construct a rational polygon function $pg_s$ on $[0,1]$ such that $y_s^i \le pg_s(x) \le f(x)$ for all $x \in I^i$ and for all $i=1,2,\cdots t$. This sequence converges to $f$.  From this polygon sequence ($pg_s$) we can easily construct an increasing sequence of polygons $(pg'_s)$ simply be $pg'_s(x) := \max\{ pg_t(x): t \le s\}$.  This increasing sequence ($pg'_s$) also converges to $f(x)$.
\end{proof}

In the following we list some more properties of semi-computable functions.

\begin{proposition}\label{prop-lc_constant function}
The constant function $f(x) = c$ is lower-semi computable if and only if $c$ is a left computable real number.
\end{proposition}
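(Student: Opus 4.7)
The plan is to prove both directions by transferring the increasing rational approximation between the level of real numbers and the level of polygon functions; this is essentially a two-line argument once the definitions are unfolded.

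For the direction $(\Leftarrow)$, I would start with a witnessing computable increasing sequence of rationals $(c_n)$ with $\lim_n c_n = c$, provided by the assumption that $c$ is left computable. Then define $pg_n \colon [0,1] \to \mathbb{R}$ by the constant law $pg_n(x) := c_n$. Each $pg_n$ is a (degenerate) rational polygon with rational turning points $(0,c_n)$ and $(1,c_n)$; the sequence $(pg_n)$ is computable uniformly in $n$ because $(c_n)$ is; it is increasing pointwise because $c_n \le c_{n+1}$; and for every $x \in [0,1]$ we have $\lim_n pg_n(x) = \lim_n c_n = c = f(x)$. So Definition \ref{def-cs-function}(1) is satisfied and $f$ is lower semi-computable.

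For the direction $(\Rightarrow)$, I would take a computable increasing sequence $(pg_n)$ of rational polygons witnessing that $f(x) = c$ is lower semi-computable, and then evaluate at a fixed rational point, say $x = 0$. This yields the sequence $(pg_n(0))$ of rational numbers: it is computable because a rational polygon is a finite rational object that can be effectively evaluated at a rational argument; it is increasing because $pg_n \le pg_{n+1}$ pointwise; and its limit is $\lim_n pg_n(0) = f(0) = c$. Hence $c$ is the limit of a computable increasing rational sequence, i.e., left computable.

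The argument has no real obstacle. The only items worth verifying are bookkeeping matters: that constant functions with rational value genuinely qualify as rational polygons in the sense used by the paper (yes, trivially), and that one can extract a computable rational sequence from a computable sequence of rational polygons by pointwise evaluation at a rational (yes, by effective evaluation of the piecewise-linear formula). Both are immediate from the basic setup recalled in Section \ref{sec-pre}, so no auxiliary lemma is needed.
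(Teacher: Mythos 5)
Your proof is correct, but it takes a different route from the paper. The paper disposes of this proposition in one line by invoking Theorem \ref{thm-eq-lcs}: given the Turing-machine characterization of lower semi-computability, one direction follows by building a machine that ignores its input and writes out the increasing rational sequence converging to $c$, and the other by feeding the machine any fixed $\rho$-name (e.g.\ of $0$) and reading off the increasing output sequence, whose limit is $f(0)=c$. You instead argue directly from Definition \ref{def-cs-function}(1): constant rational polygons $pg_n(x):=c_n$ for the ($\Leftarrow$) direction, and pointwise evaluation of the witnessing polygon sequence at the rational point $x=0$ for the ($\Rightarrow$) direction. Both arguments are sound; yours has the advantage of being entirely self-contained and independent of Theorem \ref{thm-eq-lcs}, whose proof (particularly the compactness argument in its converse direction) is the most technical part of the section, whereas the paper's version is shorter given that the theorem is already in hand. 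The two bookkeeping points you flag --- that a constant function with rational value is a legitimate rational polygon, and that a computable sequence of rational polygons can be effectively evaluated at a fixed rational argument --- are indeed immediate in the paper's setup, so no further justification is needed.
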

\begin{proof}
It follows immediately from Theorem \ref{thm-eq-lcs}
\end{proof}

Since the class of all semi-computable real numbers is not closed under the arithmetical operations, the class $\mathbb{SCF}$ is also not closed under the arithmetical operations by Proposition \ref{prop-lc_constant function}. On the other hand, it is easy to see that the class $\mathbb{LSCF}$ and $\mathbb{USCF}$ are closed under addition, and $\mathbb{SCF}$ is closed under the arithmetical operations. For the composition, we have

\begin{proposition}
The classes $\mathbb{LSCF}$, $\mathbb{USCF}$ and $\mathbb{SCF}$ are not closed under composition.
\end{proposition}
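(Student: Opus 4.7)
The plan is to reduce each failure of closure under composition to the corresponding non-closure of the associated class of real numbers, using constant functions as the bridge. By Proposition \ref{prop-lc_constant function} and its symmetric upper analogue, a constant function $h(x) \equiv c$ lies in $\mathbb{LSCF}$ iff $c \in$ LC, in $\mathbb{USCF}$ iff $c \in$ RC, and hence in $\mathbb{SCF}$ iff $c \in$ SC. So it suffices to produce, in each case, semi-computable functions $f,g$ whose composition is a constant that violates the required number-theoretic property.

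For $\mathbb{LSCF}$, I would pick any $c \in$ LC $\setminus$ RC (such $c$ exists since LC $\ne$ RC, as recorded in the chain in Section \ref{sec-pre}). Let $g(x) := c$, which lies in $\mathbb{LSCF}$ by Proposition \ref{prop-lc_constant function}, and let $f(x) := -x$, which is computable (hence lower semi-computable) on any suitable interval containing $c$. Then $(f \circ g)(x) \equiv -c$. Because $-c \in$ LC iff $c \in$ RC, our choice $c \notin$ RC yields $-c \notin$ LC, so $f \circ g \notin \mathbb{LSCF}$. A mirror argument, picking $c \in$ RC $\setminus$ LC, settles $\mathbb{USCF}$.

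For $\mathbb{SCF}$ the same trick with $f(x)=-x$ will not work, because $-x$ only swaps LC and RC and therefore keeps the composition inside SC. Instead, I would invoke the fact (quoted from \cite{AWZ00} in the introduction) that there exist $a,b \in$ LC with $a - b \notin$ SC. Set $g(x) := a$, which is in $\mathbb{LSCF} \subseteq \mathbb{SCF}$, and set $f(x) := x - b$. If $(b_n)$ is an increasing computable sequence of rationals with $b_n \nearrow b$, then $pg_n(x) := x - b_n$ is a computable sequence of rational polygons that is decreasing in $n$ and converges pointwise to $f$, so $f \in \mathbb{USCF} \subseteq \mathbb{SCF}$. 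Then $(f \circ g)(x) \equiv a - b \notin$ SC, and by the constant-function characterisation, $f \circ g \notin \mathbb{SCF}$.

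The only non-routine point is confirming that $f(x) = x - b$ really is upper semi-computable when $b$ is only left computable rather than computable; but this is visible from the explicit polygons $pg_n(x) = x - b_n$ above, so there is no genuine obstacle. The heart of the argument is simply that constants import the known failures of closure of SC, WC-versus-SC, and LC-versus-RC into the function setting.
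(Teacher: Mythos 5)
Your proof is correct, but it takes a genuinely different route from the paper's. The paper satisfies all three non-closure claims with a single heavy citation: by \cite{ZRG05}, the class DBC is the closure of LC (or RC, or SC) under total computable real functions and properly contains SC, so there is a computable $f$ and a left (resp.\ right) computable $c$ with $f(c)\notin{\rm SC}$; composing $f$ with the constant function $g(x)=c$ then kills $\mathbb{LSCF}$, $\mathbb{USCF}$ and $\mathbb{SCF}$ simultaneously, since the resulting constant is not even semi-computable. You instead use only elementary facts about the number classes: for $\mathbb{LSCF}$ and $\mathbb{USCF}$ the observation that negation swaps LC and RC together with ${\rm LC}\neq{\rm RC}$, and for $\mathbb{SCF}$ the non-closure of SC under subtraction from \cite{AWZ00}, realized by the explicitly upper semi-computable polygon approximation $pg_n(x)=x-b_n$. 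The trade-off is clear: the paper's argument is shorter and uniform but leans on a substantial external theorem, while yours needs two separate constructions (the $f(x)=-x$ trick provably cannot handle $\mathbb{SCF}$, as you correctly note) but each step is verifiable from first principles and Proposition \ref{prop-lc_constant function}. Both are complete; the only cosmetic point in yours is to fix the witnessing constants inside $[0,1]$ so that the compositions are defined on the stated domains.
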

\begin{proof}
It is shown in \cite{ZRG05} that the class $\rm DBC$ is a proper superset of {\rm WC} and hence ${\rm SC} \subsetneq {\rm DBC}$. On the other hand, {\rm DBC} is the closure of {\rm LC} or {\rm RC} under total computable real functions. This means that there exist a computable function $f$ (which is, of course also semi-computable) and a semi-computable real number $c$ such that $f(c)$ is not semi-computable. This implies that the classes $\mathbb{LSCF}$, $\mathbb{USCF}$ and $\mathbb{SCF}$ are not closed under composition immediately.
\end{proof}

\begin{proposition}
For any lower semi-computable function $f: [0,1] \to \IR$ the maximum of $f$ is a left computable real number.
\end{proposition}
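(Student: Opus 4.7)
My plan is to exploit the increasing approximating sequence of polygons directly and compute the maximum stage-by-stage.

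By Definition \ref{def-cs-function}, since $f$ is lower semi-computable, there is a computable increasing sequence $(pg_n)$ of rational polygons with $pg_n(x) \le pg_{n+1}(x)$ and $pg_n(x) \to f(x)$ for every $x \in [0,1]$. The first step is to observe that for each $n$, the quantity
\[
M_n := \max_{x \in [0,1]} pg_n(x)
\]
is a rational number which can be computed uniformly from $n$: a rational polygon is given by a finite list of rational turning points (together with the endpoints $0$ and $1$), and a continuous piecewise linear function attains its maximum at one of these breakpoints, so $M_n$ is just the maximum of a finite list of rationals extracted from the code of $pg_n$. Hence $(M_n)_{n \in \IN}$ is a computable sequence of rationals.

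Next I would verify that $(M_n)$ is increasing and bounded above by $M := \max_{x \in [0,1]} f(x)$. Monotonicity is immediate: since $pg_n(x) \le pg_{n+1}(x)$ pointwise, taking the max over $x$ gives $M_n \le M_{n+1}$. The upper bound uses that $(pg_n(x))$ is increasing and converges to $f(x)$, which forces $pg_n(x) \le f(x)$ for every $x$ and every $n$; taking the max over $x$ yields $M_n \le M$. In particular $L := \lim_n M_n$ exists and $L \le M$.

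It remains to show the reverse inequality $L \ge M$. Here I would use continuity of $f$ together with compactness of $[0,1]$: $f$ attains its maximum at some point $x^* \in [0,1]$, so $M = f(x^*) = \lim_n pg_n(x^*)$. Since $pg_n(x^*) \le M_n$ for every $n$, passing to the limit gives $M \le L$, hence $L = M$. Thus $M$ is the limit of the computable increasing rational sequence $(M_n)$, which by definition means $M$ is left computable.

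The only mild obstacle is the routine verification that $M_n$ is indeed computable from $n$, but this is straightforward from the piecewise-linear structure of rational polygons; no approximation argument is needed because the max of a rational polygon is itself rational and is read off from finitely many breakpoints. Everything else is a direct consequence of the pointwise inequality $pg_n \le f$ and continuity.
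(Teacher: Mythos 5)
Your proof is correct and takes essentially the same route as the paper: both compute the rational maximum $M_n$ of each polygon $pg_n$ effectively, observe the sequence is increasing and bounded by $\max f$, and use continuity of $f$ on the compact interval to get a maximizing point $x^*$ with $pg_n(x^*)\le M_n$ forcing convergence to $\max f$. No gaps.
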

\begin{proof}
Let $f: [0,1] \to \IR$ be a lower semi-computable function. There is a computable increasing sequence $(pg_n)$ of rational polygon functions such that $f(x) =\lim pg_n (x)$. Let $m_n :=\max\{pg_n(x) : x \in [0,1] \}$. For the rational polygon function, $pg_n$ achieves its maximum at a rational number $c_n$, i.e., $m_n = pg_n(c_n)$. Actually, we can effectively find this $c_n$ for each $n$. Thus, $m_n = pg _n (c_n) \leq pg_{n+1}(c_n) \leq \max(pg _{n+1}) = m _{n+1}$. This implies that $(m_n)$ is an increasing computable sequence of rational numbers. It remains only to show that $(m_n)$ converges to $m:= \max\{f(x) : x\in [0,1]\}$.

Since $f$ is a continuous function on $[0,1]$, there is a real number $c \in [0,1]$ such that $f(c) =m$. The sequence $(a_n)$ defined by $a_n :=pg_n(c)$ is an increasing sequence which converges to $m$. Because $m_n \ge a_n$ and $m_n \le m$, we have $\lim m_n =m$. That is, the maximum of $f$ is a left computable real number.
\end{proof}

\section{Weakly Computable Continuous Functions}\label{sec-wc}

In this section, we introduce the class of weakly computable continuous functions. They are defined first as the difference of two lower semi-computable functions. Next, we show that a continuous function is weakly computable if and only if it can be computed by Turing machine which transfers the $\rho$-names of $x$ into a sequences which converge to $f(x)$ weakly effectively. However, we will find that the characterization by sequences of polygon functions seems to induce a slightly stronger version of weakly computable functions.

\begin{definition}\label{defWC}\rm
A real function $f: [0;1] \to \IR$ is called {\em weakly-computable} if there are two lower semi-computable functions $h$ and $g$ such that $f(x) = h(x) - g(x)$. The class of all weakly-computable functions is denoted by $\mathbb{WCF}$.
\end{definition}

By definition, all semi-computable functions are weakly computable while there are weakly computable functions which are not semi-computable. An easy example is the constant function $f(x) =c$ for weakly computable real number $c$ which is not semi-computable. So we have $\mathbb{SCF} \subsetneq \mathbb{WCF}$. The next lemma shows some other basic properties of the class $\mathbb{WCF}$.

\begin{lemma}
\begin{enumerate}
  \item The class $\mathbb{WCF}$ is closed under the arithmetical operations.
  \item The class $\mathbb{WCF}$ is not closed under the composition.
\end{enumerate}
\end{lemma}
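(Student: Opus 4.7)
The plan is to treat the two parts separately. For part (1), the additive operations are straightforward: writing $f_1 = h_1 - g_1$ and $f_2 = h_2 - g_2$ with $h_i, g_i \in \mathbb{LSCF}$, I would use the identities $f_1 + f_2 = (h_1 + h_2) - (g_1 + g_2)$ and $f_1 - f_2 = (h_1 + g_2) - (g_1 + h_2)$ together with the fact already noted in Section \ref{sec-sc} that $\mathbb{LSCF}$ is closed under addition.

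For multiplication I would expand
\[
f_1 f_2 \;=\; (h_1 h_2 + g_1 g_2) \;-\; (h_1 g_2 + g_1 h_2)
\]
and reduce the problem to showing that the product of two non-negative lower semi-computable functions is again lower semi-computable. Non-negativity is arranged by a rational shift: if $(pg_n)_n$ witnesses $h \in \mathbb{LSCF}$, then the minimum of the initial rational polygon $pg_0$ on $[0,1]$ is a rational number $-M$ computable from its finitely many turning points, so $h + M \ge 0$ and $(pg_n + M)_n$ still witnesses $h + M \in \mathbb{LSCF}$; performing the same shift simultaneously on $g$ preserves $f = h - g$. For the product itself I would abandon the polygon definition in favour of the Turing machine characterisation of Theorem \ref{thm-eq-lcs}: from a $\rho$-name of $x$ a machine outputs increasing rational sequences $(y_n) \uparrow h(x)$ and $(z_n) \uparrow g(x)$; replacing them by $(\max(0, y_n))$ and $(\max(0, z_n))$ keeps both non-negative and non-decreasing, and then the termwise product $y_n z_n$ is an increasing rational sequence converging to $h(x) g(x)$, as required.

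For part (2), I would exploit a closure failure that already holds at the level of real numbers. From the results of \cite{ZRG05} cited earlier, $\mathrm{WC} \subsetneq \mathrm{DBC}$ and $\mathrm{DBC}$ is the closure of $\mathrm{WC}$ under total computable real functions, so there exist a computable $F: [0,1] \to [0,1]$ and a weakly computable number $c \in [0,1]$ with $F(c) \notin \mathrm{WC}$. The constant function $g(x) \equiv c$ lies in $\mathbb{WCF}$ (write $c = a - b$ with $a, b$ left computable and apply Proposition \ref{prop-lc_constant function}), and $F \in \mathbb{WCF}$ since every computable function is weakly computable. Their composition $F \circ g$ is the constant function with value $F(c)$, which cannot be in $\mathbb{WCF}$: any decomposition $F(c) = h - g$ with $h, g \in \mathbb{LSCF}$, evaluated at a fixed point, would express $F(c)$ as a difference of two left computable reals by Theorem \ref{thm-eq-lcs}.

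The main obstacle I anticipate is the multiplication step, specifically the claim that a product of two non-negative lower semi-computable continuous functions is lower semi-computable. This is unpleasant to prove directly from the polygon definition because products of rational polygons are piecewise quadratic rather than polygonal; switching to the Turing machine characterisation of Theorem \ref{thm-eq-lcs} sidesteps this difficulty entirely.
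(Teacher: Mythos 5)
Your proof is correct, and part (2) is essentially the paper's own argument: take a computable $F$ and a weakly computable $c$ with $F(c)\notin\mathrm{WC}$ (which exist because $\mathrm{DBC}$ is the closure of $\mathrm{WC}$ under total computable functions and $\mathrm{WC}\subsetneq\mathrm{DBC}$), and observe that $F\circ g$ for $g\equiv c$ is the constant $F(c)$, which cannot be in $\mathbb{WCF}$ since evaluating any decomposition $h-g$ at a rational point yields $F(c)$ as a difference of left computable reals. Part (1) is where you genuinely go beyond the paper: the paper disposes of closure under arithmetic with a one-line ``follows directly from the definition,'' which is literally true for sums and differences (via closure of $\mathbb{LSCF}$ under addition) but not for products. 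Your expansion $f_1f_2=(h_1h_2+g_1g_2)-(h_1g_2+g_1h_2)$, the rational shift making all four factors non-negative (legitimate because the increasing polygon sequence bounds $h$ below by the computable minimum of $pg_0$), and the switch to the Turing-machine characterisation of Theorem \ref{thm-eq-lcs} so that the termwise product of non-negative increasing approximating sequences witnesses lower semi-computability of the product, is precisely the detail the paper omits; you are also right that attempting this at the level of polygons would fail, since products of polygons are only piecewise quadratic. Two minor caveats, neither fatal: Theorem \ref{thm-eq-lcs} is stated for functions into $[0,1]$ while your shifted factors map into $\IR$, so you are implicitly using the (routine) extension of that theorem to arbitrary bounded continuous $f:[0,1]\to\IR$; and, like the paper, you leave division aside, which is reasonable since ``arithmetical operations'' here is naturally read as $+,-,\times$.
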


\begin{proof}
The proof of (1) follows directly from definition \ref{defWC}.

For the proof of (2), notice that the class WC of weakly computable real numbers is not closed under the total computable function (see \cite{ZRG05}). That is, there is a computable function $f:[0,1] \to \IR$ and a real number $c\in [0,1]$ such that $f(c)$ is not weakly computable. The computable function $f$ as well as the constant function  $g(x) =c$ are weakly computable.  The composition $f \circ g(x)$ = $f(c)$ is not a weakly computable function.
\end{proof}

The weakly computable functions can be naturally described by Turing machines as follows.

\begin{theorem}\label{WC_TM}
A continuous real function $f: [0; 1] \to \IR$ is weakly computable if and only if there is a Turing machine $M$ that transfers any sequence $(x_s)$ which effectively converges to $x \in [0,1]$ into a sequence $(u_s)$ of rational numbers which converges weakly effectively to $f(x)$.
\end{theorem}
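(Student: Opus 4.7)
The biconditional has an easy direction coming from Theorem~\ref{thm-eq-lcs} and a subtler converse. For the $(\Rightarrow)$ implication, write $f = h - g$ with $h,g$ lower semi-computable and invoke Theorem~\ref{thm-eq-lcs} to obtain type-2 machines $M_h, M_g$ that convert any $\rho$-name of $x$ into increasing rational sequences $(a_s) \to h(x)$ and $(b_s) \to g(x)$. Let $M$ run $M_h$ and $M_g$ in parallel on the same input and emit $u_s := a_s - b_s$. Clearly $u_s \to f(x)$, and because the two component sequences are monotone and telescope,
\[
\sum_{s=0}^\infty |u_{s+1}-u_s| \;\le\; \sum_{s=0}^\infty (a_{s+1}-a_s) + \sum_{s=0}^\infty (b_{s+1}-b_s) \;=\; (h(x)-a_0) + (g(x)-b_0),
\]
which is finite because $h$ and $g$ are continuous, hence bounded, on $[0,1]$. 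Thus $(u_s)$ converges to $f(x)$ weakly effectively.

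For the converse, given a TM $M$ with the stated property, the natural idea is the classical split of a bounded-variation rational sequence into its two monotone variation parts: on input $(x_s)$, simulate $M$ online to obtain $(u_s)$ and emit the two sequences
\[
a_s := u_0 + \sum_{t<s}(u_{t+1}-u_t)^+, \qquad b_s := \sum_{t<s}(u_{t+1}-u_t)^-.
\]
Then $a_s - b_s = u_s$, both sequences are computable and increasing, and both are bounded (hence convergent) by the assumed finiteness of $\sum|u_{s+1}-u_s|$. If these two procedures matched the hypothesis of Theorem~\ref{thm-eq-lcs} for some continuous $h$ and $g$, we would immediately get $f = h - g$ as a difference of lower semi-computable functions.

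The main obstacle, which forces real work, is that $\rho$ is not injective on $[0,1]$: different $\rho$-names of the same $x$ can make $M$ emit different sequences $(u_s)$, and the limits $\lim_s a_s$ and $\lim_s b_s$ are not obviously functions of $x$ alone. To pin down well-defined continuous $h$ and $g$, I would adapt the $(\Leftarrow)$ argument of Theorem~\ref{thm-eq-lcs}: at each stage $n$ and each rational $r \in [0,1]$, run $M$ on the constant $\rho$-name $(r, r, \ldots)$ to obtain $u_0^r, \ldots, u_n^r$ together with the finite use $k(r,n)$; this certifies that $M$ emits the same initial outputs on any $\rho$-name of $x \in I_r^n := (r - 2^{-k(r,n)}, r + 2^{-k(r,n)})$ that agrees with $(r,r,\ldots)$ through position $k(r,n)$. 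Compactness of $[0,1]$ lets us pick a finite rational subcover at each stage, and interpolating the variation partial sums $a_n^r$ and $b_n^r$ across the chosen grid produces two computable sequences of rational polygons, which after taking running pointwise maxima become increasing. The hard part I expect is verifying that these two polygon sequences converge pointwise to continuous limits $h, g$ with $h - g = f$ everywhere on $[0,1]$; this should reduce to showing that the total variation established at grid rationals propagates uniformly to nearby reals, which is precisely what ultimately forces the limit functions to depend only on $x$ and not on the $\rho$-name chosen along the way.
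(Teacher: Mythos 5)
Your forward direction is exactly the paper's argument: write $f=h-g$, run the two machines supplied by Theorem~\ref{thm-eq-lcs} in parallel, output the difference, and bound $\sum_s|u_{s+1}-u_s|$ by the two telescoping monotone sums; this part is complete. For the converse, your splitting of $(u_s)$ into its positive and negative variation parts is also precisely the paper's construction (its sequences $y_s := u_0+\sum_{i\le s}(u_{i+1}\dotminus u_i)$ and $z_s := \sum_{i\le s}(u_{i}\dotminus u_{i+1})$ are your $a_s$ and $b_s$).

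The gap is the one you yourself flag and then do not close. The limits $\lim_s a_s$ and $\lim_s b_s$ are the total positive and negative variations of the particular output sequence $(u_s)$, and these genuinely depend on which $\rho$-name of $x$ was fed to $M$: two names of the same $x$ can produce one monotone output and one oscillating output with strictly larger positive variation. So the machines emitting $(a_s)$ and $(b_s)$ do not, as they stand, compute functions of $x$, and Theorem~\ref{thm-eq-lcs} cannot be applied to them directly. Your proposed repair via the compactness/polygon construction is the right instinct, but its decisive step --- that the resulting increasing polygon sequences converge pointwise to \emph{continuous} limits $h,g$ with $h-g=f$, independently of the names used along the way --- is exactly what you leave unverified, and it is not automatic: an increasing limit of polygons is in general only lower semi-continuous, while Definition~\ref{def-cs-function} and Theorem~\ref{thm-eq-lcs} apply only to continuous functions. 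So your converse is incomplete at this point. For what it is worth, the paper's own proof is no more careful here: it simply asserts that $M_1$ and $M_2$ ``compute'' functions $g$ and $h$ and invokes Theorem~\ref{thm-eq-lcs}, eliding both name-independence and continuity. You have correctly located the real difficulty; what is still needed is an argument (for instance, a uniformity argument showing that the variation accumulated on the canonical constant names of rationals controls the variation produced on all names of nearby reals) establishing that the two limit functions are well defined and continuous.
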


\begin{proof}
$``\Rightarrow"$
Let $f:[0,1] \to \IR$ be a weakly computable function. Then there are lower semi-computable functions $g$ and $h$ such that $f=g-h$. By Theorem \ref{thm-eq-lcs}, there are two Turing machines $M_1$ and $M_2$ such that, for any input of sequence $(x_s)$ of rational numbers which converges effectively to $x\in[0,1]$, $M_1$ and $M_2$ will output increasing sequences $(y_s)$ and $(z_s)$ of rational numbers which converge to $g(x)$ and $h(x)$, respectively.  From $M_1$ and $M_2$, it is easy to construct a new Turing machine $M$ which outputs the sequence $(y_s -z_s)$ for the input $(x_s)$.  Let $u_s:= y_s-z_s$ for all $s$. Then the sequence $(u_s)$ converges to $f(x)$. This sequence converges actually weakly effectively because of the following estimation.
\begin{eqnarray*}
\sum_{n=0}^\infty |u_{n+1} -u_n|
&=& \sum_{n=0}^\infty|(y_{n+1} -z_{n+1})-(y_n-z_n)|\\
&=& \sum_{n=0}^\infty|(y_{n+1} -y_n) - (z_{n+1}-z_n)|\\
&\le& \sum_{n=0}^\infty|y_{n+1} -y_n| + \sum_{n=0}^\infty|z_{n+1}-z_n|\\
&=& \sum_{n=0}^\infty(y_{n+1} -y_n) + \sum_{n=0}^\infty(z_{n+1}-z_n)\\
&=& \lim_{n\to\infty} y_n -y_0 + \lim_{n\to\infty} z_n -z_0 = g(x) +h(x) -y_0 -z_0.
 \end{eqnarray*}

``$\Leftarrow$": Now assume that we have a Turing machine $M$ that transfers any sequence $(x_s)$ of rational numbers converging effectively to $x$ into a sequence $(u_n)$ of rational numbers which converges weakly effectively to $f(x)$. From this sequence $(u_s)$ we can define two increasing sequences $(y_s)$ and $(z_s)$ of rational numbers by
\begin{eqnarray*}
  y_s := u_0 + \sum^s_{i=0}(u_{i+1} \dotminus u_i)\ \mbox{ and }
  z_s := \sum^s_{i=0}(u_{i} \dotminus u_{i+1}).
\end{eqnarray*}
Since the sequence $(u_s)$ converges weakly effectively, there is a constant $c$ such that $\sum |u_{s+1} -u_s| \le c$. This implies immediately that the sequences $(y_s)$ and $(z_s)$ are bounded above by $u_0+c$ and $c$, respectively. Hence they are both convergent sequences.  Let $g(x):= \lim_{s\to\infty} y_s$ and $h(x):= \lim_{s\to\infty} z_s$ be their limits. This means that we can construct two Turing machines $M_1$ and $M_2$ from $M$ which transfer the sequence $(x_s)$ to the increasing sequences $(y_s)$ and $(z_s)$, respectively. By Theorem \ref{thm-eq-lcs}, the functions $g$ and $h$ computed by $M_1$ and $M_2$ are lower semi-computable and hence $f =g-h$ is weakly computable.
\end{proof}

We are also interested in the characterization of weakly computable functions by computable sequences of rational polygon functions. From Theorem \ref{WC_TM} we can easily get an equivalent characterization of weakly computable functions by computable sequences of rational polygons which converges point-wise and weakly effectively. However, the situation is slightly different if we consider the uniformly convergent polygon sequences.

In the case of real numbers, $x$ is weakly computable if there is a computable sequence $(x_s)$ of rational numbers which converges weakly effectively to $x$, that is $\sum_{s=0}^\infty |x_{s+1} -x_s| \le c$ for some constant $c$. For different sequences, the constant $c$ may change. However, by deleting some initial terms, we actually can replace the constant $c$ by 1 and still have the equivalent definition of weakly computable real numbers.

For functions, the situation is completely different. If we try to characterize the weakly computable function $f:[0,1]\to \IR$ by sequences of rational polygons,  then we need a computable sequence $(pg_s)$ of rational polygons converging to $f$ such that $\sum_{s=0}^\infty |pg_{s+1}(x), pg_s(x)| \le c$ for some constant $c$. This number $c$ depends on the argument $x$. We can naturally require a uniform constant $c$ that works for all $x$ in the interval $[0,1]$. This is equivalently to require that $\sum_{s=0}^\infty d(pg_{s+1}, pg_s) \le c$, where $d(f,g):=\max\{|f(x)-g(x)|: 0\le x \le 1\}$ is the {\em distance} between the functions $f$ and $g$. Our next theorem shows that this kind of polygon sequence characterization is equivalent to the uniformity requirement of the constant $c$ in the weakly convergent outputs of the Turing machines. In this case, the uniform constant $c$ can be simply replaced by the constant number 1.

\begin{theorem}\label{thm-wcf-uniform}
Let $f: [0,1] \to [0; 1]$ be a continuous real function. The following are equivalent.
\begin{enumerate}
  \item There is a computable sequence $(pg_s)$ of rational polygon functions which converges to $f$ weakly effectively in the sense that $\sum_{s=0}^\infty d(pg_{s+1}, pg_s) \le c$ for some constant $c$.
  \item There is a Turing machine $M$ which transfers any sequence of rational numbers effectively converging to $x\in [0,1]$ into a rational sequence $(y_s)$ converging to $f(x)$ and $\sum_{s=0}^\infty |y_{s+1} -y_s| \le 1$.
\end{enumerate}
\end{theorem}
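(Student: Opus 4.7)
The plan is to prove the two directions separately, adapting the constructions from Theorem~\ref{thm-eq-lcs}. For (1)$\Rightarrow$(2), I would exploit the observation that the hypothesis forces the sequence $(pg_s)$ to be uniformly Cauchy, so the tail sum $T_N := \sum_{s\ge N} d(pg_{s+1},pg_s)$ converges to $0$. Hence there exists an index $N$ (chosen non-constructively; the statement only requires \emph{existence} of the Turing machine, not a uniform effective procedure for producing it from $(pg_s)$) with $T_N \le 1/2$. The Turing machine $M$ would then, on input a $\rho$-name $(x_t)$ of $x$, form the rational approximation $u_s := x_{m_{s+N}(s+3)}$ using a computable linear modulus of continuity $m_i$ of $pg_i$ (as in the proof of Theorem~\ref{thm-eq-lcs}) and output $y_s := pg_{s+N}(u_s)$. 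The modulus condition gives $|y_s - pg_{s+N}(x)| \le 2^{-s-3}$, hence
\[
|y_{s+1}-y_s| \le d(pg_{s+N+1}, pg_{s+N}) + 2^{-s-3} + 2^{-s-4}.
\]
Summing yields $\sum_s |y_{s+1}-y_s| \le 1/2 + 1/4 + 1/8 < 1$, and $y_s \to f(x)$ since $pg_{s+N}\to f$ uniformly and $u_s\to x$.

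For the converse (2)$\Rightarrow$(1), I would follow the pattern of the $\Leftarrow$-direction of Theorem~\ref{thm-eq-lcs}. For each rational $r\in[0,1]$ I feed the constant sequence $(r,r,\ldots)$ into $M$ and obtain a sequence $(y_s^r)$ with $\sum_s |y_{s+1}^r - y_s^r| \le 1$ and $y_s^r\to f(r)$. Let $k_s^r$ be the prefix length of $(r,r,\ldots)$ that $M$ consults in order to output the first $s+1$ terms, and let $I_{r,s} := (r-2^{-k_s^r}, r+2^{-k_s^r})$. At each stage $s$, compactness of $[0,1]$ gives a finite set $R_s\subset [0,1]\cap\IQ$ (arranged so that $R_s\subseteq R_{s+1}$) such that $\{I_{r,s}: r\in R_s\}$ covers $[0,1]$; this set can be determined effectively by enumeration. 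Define $pg_s$ to be the piecewise linear function on $[0,1]$ that interpolates the finite data $\{(r, y_s^r): r\in R_s\}$ (extended affinely to the endpoints). Uniform convergence $pg_s \to f$ follows from the combination of $y_s^r\to f(r)$ for every representative, the refinement $R_s\subseteq R_{s+1}$, and the uniform continuity of $f$.

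The delicate step, and the main obstacle I anticipate, is bounding $\sum_s d(pg_{s+1}, pg_s)$. For the breakpoints $r\in R_s\cap R_{s+1}$ the contribution is immediate from $\sum_s |y_{s+1}^r - y_s^r|\le 1$, but at newly added breakpoints $r\in R_{s+1}\setminus R_s$ one must compare $y_{s+1}^r$ to the linear interpolation of $pg_s$ at $r$, and a naive bound makes the total variation accumulate. The remedy is to refine $R_s$ slowly, always inserting new representatives within distance at most $2^{-s}$ (say) of an existing representative in $R_s$, and to use the fact that within $I_{r',s+1}$ the output $y_{s+1}^r$ agrees with that obtained from a $\rho$-name of $r$ formed from a prefix of $(r',r',\ldots)$. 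With this bookkeeping one shows that the extra contributions at new breakpoints form a summable series, yielding $\sum_s d(pg_{s+1},pg_s) \le c$ for a uniform constant $c$ (not necessarily $1$), which is what condition (1) demands.
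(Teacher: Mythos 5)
Your direction (1)$\Rightarrow$(2) is essentially the paper's own argument: discard an initial segment so that the tail sum of distances is at most $1/2$, evaluate the shifted polygons at sufficiently accurate approximations of $x$ obtained via a computable modulus of continuity, and bound the variation of the output by $\sum_s d(pg_{s+1},pg_s)$ plus a geometric series. The arithmetic checks out and the non-constructive choice of the cut-off $N$ is legitimate, since only the existence of $M$ is asserted. This half is fine.

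The direction (2)$\Rightarrow$(1) contains a genuine gap, and it sits exactly where you say it does. The quantity you must control is $\sum_s d(pg_{s+1},pg_s)=\sum_s\max_{r}|pg_{s+1}(r)-pg_s(r)|$, where the maximum runs over the breakpoints of both polygons; the hypothesis only gives the \emph{pointwise} bound $\sum_s|y^r_{s+1}-y^r_s|\le 1$ for each fixed $r$ (and, worse, for each fixed choice of $\rho$-name of $r$). A sum of maxima over a growing breakpoint set is not controlled by pointwise variation bounds: each newly introduced representative can spend its entire unit budget at the single stage at which it enters, so $\sum_s\max_r|y^r_{s+1}-y^r_s|$ can diverge even though every column sums to at most $1$. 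Your proposed remedy --- refining $R_s$ slowly and using prefix-agreement of $M$'s outputs on nearby names --- is the right kind of idea, but it is only asserted, not carried out, and there are concrete obstacles: the radius of $I_{r,s}$ is dictated by how much input $M$ reads (so you cannot freely insist that new points lie within $2^{-s}$ of old ones while still maintaining a cover), the sequence $(y^r_s)$ changes when you change the name of $r$ (so the budget of $1$ is not shared between the constant name of $r$ and the name of $r$ that begins with a prefix of $r'$), and at a new breakpoint you must also absorb the interpolation error $|y_s^{r'}-y_s^{r''}|$ between adjacent old representatives, which is not obviously summable. Since this summability claim is precisely what separates condition (1) from the weaker pointwise statement of Theorem \ref{WC_TM}, the proof is incomplete without it. (For what it is worth, the paper's own proof of this direction is also terse here: it bounds $d(pg_{s+1},pg_s)$ by $2^{-s+1}+|y_{s+1}(c)-y_s(c)|$ at a maximizing point $c$ and then sums, even though $c$ depends on $s$; so you have correctly identified the crux, but neither your sketch nor the bookkeeping you outline closes it.)
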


\begin{proof}
``$\Rightarrow$":
Suppose that  $(pg_s)$ is a computable sequence of rational polygon functions which converges to $f$ such that $\sum_{s=0}^\infty d(pg_{s+1}, pg_s) \le c$ for some constant $c$. By deleting some initial terms of the sequence, we can reduce the sum $\sum_{s=0}^\infty d(pg_{s+1}, pg_s)$. So, without loss of generality, we can suppose that $\sum_{s=0}^\infty d(pg_{s+1}, pg_s) \le 1/2$. From the computable sequence $(pg_s)$ of rational polygon functions, we can find a computable uniform modulus function $m:\IN^2 \to \IN$ such that
\begin{eqnarray}\label{cond-unif-modulus}
|x-y| \le 2^{-m(i,s)} \Longrightarrow |pg_i(x) -pg_i(y)| \le 2^{-(s+3)}
\end{eqnarray}\label{cond-x_s-m}
for all $x, y \in [0,1]$. We can assume that $m(i,s)$ is increasing for both $i$ and $s$. Then we have
\begin{eqnarray}
|x -x_{m(s,s)}| \le 2^{-m(s,s)} \qquad \mbox{and}\qquad
|pg_s(x) - pg_s(x_{m(s,s)})| \le 2^{-(s+3)}.
\end{eqnarray}
if the sequence $(x_s)$ converges to $x$ effectively.

Now we can construct a Turing machine $M$ as follows: for any input sequence $(x_s)$ of rational numbers, $M$ simply outputs the rational sequence $(y_s)$ defined by $y_s := pg_s(x_{m(s,s)})$ for all $s$. If the sequence $(x_s)$ converges to $x$ effectively, then $\lim_{s\to\infty} y_s = \lim_{s\to\infty} pg_s(x_{m(s,s)}) = \lim_{s\to\infty} pg_s(x) =f(x)$. It remains only to show that $\sum_{s=0}^\infty |y_{s+1} -y_s| \le 1$. From the following estimation,
\begin{eqnarray*}
&& |y_{s+1} -y_s| = |pg_{s+1}(x_{m(s+1, s+1)}) - pg_s(x_{m(s,s)})|\\
&\le&  |pg_{s+1}(x_{m(s+1, s+1)})-pg_{s+1}(x)|+
        |pg_{s+1}(x)- pg_s(x)|+
        |pg_s(x) - pg_s(x_{m(s,s)})|\\
&\le& 2^{-(s+3)} + |pg_{s+1}(x)- pg_s(x)| + 2^{-(s+3)}\\
&\le& 2^{-(s+2)} + d(pg_{s+1}, pg_s),
\end{eqnarray*}
we have $\sum_{s=0}^\infty |y_{s+1} -y_s| \le 2^{-1} + \sum_{s=0}^\infty d(pg_{s+1}, pg_s) \le 1/2 +1/2 = 1$.

``$\Leftarrow$":
Suppose that $M$ is a Turing machine which transfers any sequence of rational numbers effectively  converging to $x\in [0,1]$ into a rational sequence $(y_s)$ converging to $f(x)$ and $\sum_{s=0}^\infty |y_{s+1} -y_s| \le 1$. In order to reflect the dependence of this sequence to the real number $x$, we will denote $y_s$ more suitably by $y_s(x)$. By the compactness of the closed interval $[0,1]$ and the similar construction in the proof of Theorem \ref{thm-eq-lcs}, we can construct a computable sequence $(pg_s)$ of rational polygon functions such that
\begin{eqnarray*}
|y_s(x) -pg_s(x)| \le 2^{-s}
\end{eqnarray*}
for any $x\in [0,1]$ and any $s$.

Notice that, as continuous functions, polygon functions $pg_s$ and $pg_{s+1}$ achieve their maximum distance at some point $c\in[0,1]$, that is $d(pg_{s+1}, pg_s) = |pg_{s+1}(c) -pg_s(c)|$. Thus we have the the following inequality
\begin{eqnarray*}
&& d(pg_{s+1}, pg_s) = |pg_{s+1}(c) -pg_s(c)|\\
&\le& |pg_{s+1}(c) -y_{s+1}(c)| +|y_{s+1}(c) -y_{s}(c)| +|y_{s}(c) -pg_s(c)|\\
&\le& 2^{-s+1} +|y_{s+1}(c) -y_{s}(c)|
\end{eqnarray*}
Therefore, can conclude that $\sum_{s=0}^\infty d(pg_{s+1}, pg_s) \le 3 + \sum_{s=0}^\infty |y_{s+1}(c) -y_s(c)| \le 4$.
\end{proof}

The functions characterized in theorem \ref{thm-wcf-uniform} require the existance of a uniform constant bounding the weakly effective convergence of all sequences $(y_s)$ converging to $f(x)$ output by a Turing machine. Therefore, we can naturally call them {\em uniformly weakly computable functions}, denoted $\mathbb{UWCF}$. The next theorem proves that the class of uniformly weakly computable functions is smaller than the class of weak computable functions.

\begin{theorem}
There exists a weakly computable function $f$ such that $f$ is not uniformly weakly computable. That is, if a computable sequence $(pg_s)$ of rational polygons converges to $f$, then the convergence is not weakly effective, i.e.,  the sum $\sum_{s=0}^\infty d(pg_s, pg_{s+1})$ is infinite.
\end{theorem}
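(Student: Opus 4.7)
The strategy is to construct an explicit $f\in\mathbb{WCF}$ by placing, in disjoint bumps along $[0,1]$, a family of weakly computable reals whose intrinsic variational cost grows without bound, and then to observe that any polygon sequence converging to $f$ must pay, in sup-distance, the cost at every bump separately. The key point is that pointwise weak convergence at any single rational $x$ can be bounded, but the sup-distance $d(pg_{s+1},pg_s)$ adds up all pointwise costs, and with infinitely many independent costs the total must diverge.

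I would first produce a uniformly computable family $(c_n)_{n\in\IN}$ of weakly computable reals with $|c_n|\to 0$ and with the following lower bound on variation: every computable rational sequence $(y_s)\to c_n$ satisfies $\sum_{s=0}^{\infty} |y_{s+1}-y_s|\ge n$. Such a family can be extracted from the strictness of $\mathrm{SC}\subsetneq\mathrm{WC}$: because a non-semi-computable WC real has strictly positive intrinsic approximation cost, one can, by working with independent WC witnesses encoded on disjoint bit positions, force the cost of the $n$-th real to exceed $n$ while keeping its magnitude small. Write each $c_n = a_n - b_n$ in a uniformly left-computable fashion. Choose disjoint rational closed sub-intervals $I_n\subset [0,1]$ around rational peaks $r_n$ (say neighbourhoods of $r_n = 2^{-n}$ of rapidly shrinking width), define $f$ to be a triangular bump of height $c_n$ peaked at $r_n$ on $I_n$, and $f=0$ off $\bigcup_n I_n$; the continuity of $f$ (including at $0$) is immediate from $|c_n|\to 0$ and the shrinking widths. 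Replacing peak heights $c_n$ by $a_n$ and $b_n$ respectively, and building staged computable increasing polygon approximations in the obvious way, yields lower semi-computable $g,h$ with $f = g-h$, so $f\in\mathbb{WCF}$ by Definition \ref{defWC}.

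For the non-UWC part, suppose $(pg_s)$ is any computable sequence of rational polygons converging to $f$. Evaluating at the rational peak $r_n$ gives a computable rational sequence $(pg_s(r_n))_{s\in\IN}$ converging to $f(r_n) = c_n$, so by the defining property of $(c_n)$,
$$\sum_{s=0}^{\infty}|pg_{s+1}(r_n)-pg_s(r_n)|\;\ge\; n.$$
Since $|pg_{s+1}(r_n)-pg_s(r_n)|\le d(pg_{s+1},pg_s)$ for every $s$, summing over $s$ yields $\sum_{s=0}^{\infty} d(pg_{s+1},pg_s)\ge n$ for every $n\in\IN$, so the series is infinite, as required. The main obstacle is the first step: producing the family $(c_n)$ with $|c_n|\to 0$ but unboundedly large variational cost. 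Naive scaling $c_n := n\cdot c^{\ast}$ of a single non-SC witness $c^{\ast}\in\mathrm{WC}$ gives $V_{\min}(c_n) = n\,V_{\min}(c^{\ast})\to\infty$ but also blows up $|c_n|$; one must instead use genuinely different $c_n$, each carrying its own incompressible oscillation in every LC decomposition while having small magnitude. Balancing small size against forced oscillation is the delicate technical heart of the argument.
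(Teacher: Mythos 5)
Your argument hinges on a key lemma that is false: there is no family $(c_n)$ of \emph{weakly computable} reals such that every computable rational sequence converging to $c_n$ has total variation at least $n$. In fact, for \emph{every} weakly computable real $c$ and every $\varepsilon>0$ there is a computable rational sequence converging to $c$ with total variation less than $\varepsilon$: take any computable $(y_s)\to c$ with $\sum_s|y_{s+1}-y_s|\le C<\infty$; since this series converges, its tails tend to $0$, so some shifted sequence $(y_{N+s})_s$ is computable, converges to $c$, and has variation below $\varepsilon$. (The paper itself uses exactly this normalization when it replaces the constant $c$ by $1$ in the definition of WC.) So the ``intrinsic variational cost'' $V_{\min}(c)$ you invoke is $0$ for every WC real, and a non-semi-computable WC real does \emph{not} have strictly positive approximation cost. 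You cannot escape this by taking the $c_n$ outside WC either, since then $f(r_n)=c_n$ would fail to be weakly computable and $f$ could not be written as $g-h$ with $g,h$ lower semi-computable (evaluating the increasing polygon approximations of $g,h$ at the rational point $r_n$ would exhibit $c_n$ as a difference of left computable reals). The tension you flag at the end --- small $|c_n|$ versus large forced oscillation --- is not a delicate technicality to be balanced; it is an impossibility.

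The underlying conceptual point is that the obstruction to uniform weak computability cannot be localized as a fixed, machine-independent cost at individual points. The paper's proof instead diagonalizes: it fixes witnesses $x_e=2^{-e}$ and an enumeration $(M_e)$ of candidate machines producing polygon sequences with variation budget $\le 1$, and it defines $f(x_e)$ \emph{adaptively}, moving it by $2^{-e}$ away from $pg^e_{n}(x_e)$ each time $M_e$'s output catches up to the current value of $f_s(x_e)$. The budget $\sum_i|pg^e_{i+1}(x_e)-pg^e_i(x_e)|\le 1$ limits $M_e$ to roughly $2^e$ catch-ups, so $g_s(x_e)$ and $h_s(x_e)$ increase at most about $2^e$ times by $2^{-e}$ each and hence stay bounded, which keeps $f=g-h$ weakly computable while defeating every uniformly weakly convergent sequence. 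If you want to repair your write-up, you should replace the static family $(c_n)$ by such a requirement-by-requirement construction in which the value at each witness point responds to the enumerated machine it is meant to defeat.
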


\begin{proof}
We will construct a weakly computable function $f:[0,1] \to \IR$ as the limit of a computable sequence $(f_s)$ of rational polygons. To be  weakly computable, the function $f$ must be the difference of two lower semi-computable functions $g$ and $h$, that is, $f = g-h$. Thus, we need only to construct two increasing computable sequences $(g_s)$ and $(h_s)$ of rational polygons which converge to $g$ and $h$, respectively, and then let $f_s =g_s -h_s$ for all $s$.

In addition, $f$ must be different from the limits of any computable sequences of rational polygons which converge uniformly and weakly effectively. That is, if $(pg_s)$ is a computable sequence of rational polygons such that $\sum_{s=0}^\infty d(pg_{s+1}, pg_s) \le c$ for some constant $c$, then $\lim pg_s \neq f$. By just deleting some initial terms of the sequence, we can find  another computable sequence $(pg'_s)$ which has the same limit such that  $\sum_{s=0}^\infty d(pg'_{s+1}, pg'_s)\le 1$. Therefore, without loss of generality,  we can consider only the computable sequences of rational polygons $(pg_s)$ such that $\sum_{s=0}^\infty d(pg_{s+1}, pg_s)\le 1$. Suppose that  $(M_e)_{e\in \IN}$ is an effective enumeration of all Turing machines which possibly compute a sequence of rational polygons, then it suffices that the constructed function $f$ satisfies all the following requirements:
\begin{eqnarray*}
R_e:\qquad  \mbox{If $M_e$ computes $(pg^e_s)_{s}$ and $\sum_{s=0}^\infty d(pg^e_{s+1}, pg^e_s)\le 1$, then $\lim_{s\to\infty} pg^e_s \neq f$.}
\end{eqnarray*}

To satisfy the requirement $R_e$, we choose a number $x_e :=2^{-e}$ in the interval $[0,1]$ and then try to define $f_s(x_e)$ in such a way that $|\lim pg^e_s(x_e) - f(x_e)| \ge 2^{-e}$, if the sequence $(pg^e_s)_s$ converges weakly effectively. That is, $x_e$ is a witness of the requirement of $R_e$.

The sequence $(f_s)$ for $f_s := g_s -h_s$ is constructed in stages.

Initially, at the stage $s=0$, let $f_0(x) = g_0(x) =h_0(x) =0$ for all
$x$ in the interval $[0,1]$.

At stage $s+1$, suppose that we have defined the finite sequence $(f_t)_{t\le s}$, as well as $(g_t)_{t\le s}$ and $(h_t)_{t\le s}$. Both $(g_t)_{t\le s}$ and $(h_t)_{t\le s}$  are increasing sequences of rational polygons and $f_t =g_t- h_t$ for all $t\le s$. Then, we simulate the computations of all Turing machines $M_e$ up to $s$ steps for $e\le s+1$ to determine if $M_e$ outputs an initial segment of a rational polygon sequence, i.e., if it outputs a sequence $(A_i)_{i\le n_{e_s}}$, for some natural number $n_{e_s}$, of finite sets of rational points which determines a rational polygon on the interval $[0,1]$. If it is not the case, then go directly to the next stage without doing anything. Otherwise, suppose that $(pg^e_i)_{i\le n_{e_s}}$ is the the finite sequence of rational polygon functions which is computed by the Turing machine $M_e$ up to stage $s$. If $\sum_{i=0}^{n_{e_s}} |pg^e_{i+1}(x_e) -pg^e_i(x_e)| >1$ or $|f_s(x_e) -pg^e_{n_{e_s}}(x_e)| > 2^{-e}$, then we needn't do anything. Otherwise, redefine $g_{s+1}(x_e) = g_s(x_e) + 2^{-e}$ if  $f_s(x_e) \le pg^e_{n_{e_s}}(x_e)+ 2^{-e}$, and redefine $h_{s+1}(x_e) = h_s(x_e) + 2^{-e}$ if  $f_s(x_e) \ge pg^e_{n_{e_s}}(x_e)+ 2^{-e}$. Finally, redefine the rational polygons $g_{s+1}$, $h_{s+1}$ and $f_{s+1}$ accordingly. Particularly, if $e$  is the largest index such that $g_{s+1}(x_e)$  was redefined, we should define $g_{s+1}(x) =g_{s+1}(x_e)$ for all $x$ in the interval $[0, x_e]$. $h_{s+1}$ should be defined in a similar way.

By the construction above, $g_{s+1}$ and $h_{s+1}$ are obviously increasing computable sequences of rational polygon functions. Because $g_s(x_e)$ or $h_s(x_e)$ will be increased by $2^{-e}$ only if the polygon sequence $(pg^e_i)$ computed by $M_e$ makes a jump larger than $2^{-e}$ on the argument $x_e$ and because of the condition $\sum_{i=0}^{n_{e_s}} |pg^e_{i+1}(x_e) -pg^e_i(x_e)| \le 1$, there are at most $2^e$ possible changes of $g_s$ and $h_s$ on the point $x_e$. Therefore, both sequences $(g_{s})$ and $(h_{s})$ are bounded above and hence converge.

On the other hand, if $k$ is the limit of a computable sequence of rational polygons which converges uniformly and weakly effectively, then there is a Turing machine $M_e$ which computes a sequence $(pg^e_s)_s$ of rational polygons such that $\sum_{s=0}^\infty d(pg^e_{s+1}, pg^e_s) \le 1$ and $\lim_{s\to\infty} pg^e_s =k$. Then, by the construction, we have $|f_{s+1}(x_e) -pg^e_{n_{e_s}}(x_e)| \ge 2^{e+1}$ for any $s$. This implies that, $f(x_e) = \lim f_s(x_e) \neq \lim pg^e_s(x_e) =k(x_e)$. Therefore, the function $f$ is weakly computable but not uniformly weakly computable.
\end{proof}


\begin{thebibliography}{1}

\bibitem{AWZ00}
Klaus Ambos-Spies, Klaus Weihrauch, and Xizhong Zheng. {\em Weakly Computable Real Numbers}. Journal of Complexity, 284:499-518, 2000.

\bibitem{Be82}
Gerald Beer. {\em Cone Lattices of Upper Semicontinuous Functions}. Proceedings of the American Mathematical Society, 86:81-84, 1982.

\bibitem{Grz57}
Andrzej Grzegorczyk. {\em On the definitions of computable real functions}. Fundamenta Mathematicae, 44:61-71, 1957.

\bibitem{Ho99}  	
Chun-Kuen Ho. {\em Relatively recursive reals and real functions}. Theoretical Computer Science, 210:99-120, 1999.

\bibitem{Lac55}
Daniel Lacombe. {\em Extension de la notation de fonction r\'{e}cursive aux fonctions d'une ou plusieurs variables r\'{e}elles III.} Computes Rendus Acad\'{e}mie des Sciences Paris, 241:151-153, July 1955. Th\'{e}orie des fonctions.

\bibitem{PR89}
Marian B.~Pour-El and J. Ian.~Richards {\em Computability in Analysis and Physics}, Springer-Verlag, Berlin, 1989.

\bibitem{Ric54}
H.G. Rice. {\em Recursive Real Numbers}. Proceedings of the American Mathematical Society, 5:784-791, 1954.

\bibitem{Rob51}
Raphael M. Robinson.  Review of ``{P}eter, {R}., {R}ekursive {F}unktionen'',
{\em The Journal of Symbolic Logic}, 16:280-282, 1951.

\bibitem{Rog67}
Hartley Rogers, Jr. {\em Theory of recursive functions and effective computability}. McGraw-Hill Book Co., New York, 1967.

\bibitem{So87}
Robert I. Soare. {\em Recursively Enumerable Sets and Degrees, A Study of Computable Functions and Computably Gernerated Sets}. Springer, Berlin, 1987.

\bibitem{Wei00}
Klaus Weihrauch.  {\em Computable Analysis, An Introduction}.  Springer, Berlin Heidelberg, 2000.

\bibitem{WZ98}
K. Weihrauch, X. Zheng. {\em A finite hiearchy of the recursively enumerable real numbers}. MFCS'98, Brno, Czech Republic, pp. 789-806, 1998.

\bibitem{WZ00}
Klaus Weirauch and Xizhong Zheng. {\em Computability on continous, lower semi-continous and upper semi-conitnous real functions}. Theortical Computer Science, 234:109-113, 2000.

\bibitem{Zhe07}
Xizhong Zheng. {\em Effective Hierarchy of Real Numbers}. Shaker Verlag, Aachen, 2007.

\bibitem{Zhe02}
Xizhong Zheng. {\em The closure properties on real numbers under limits and computable operators}. Theoretical Computer Science, 284(2): 499-18, 2002.

\bibitem{ZLB06}
Xizhong Zheng, Dianchen Lu, and Kejin Bao. {\em Divergence bounded computable real numbers}. Theoretical Computer Science, 351, 2006.

\bibitem{ZRG05}
Xizhong Zheng, Robert Rettinger, Romain Genler. {\em Closure properties of real number classes under CBV functions}. Theory of Computing Systems, 38:701-729, 2005.

\end{thebibliography}
\end{document}